\documentclass{article}

\usepackage[noadjust]{cite}

\usepackage{scrextend}
\usepackage[margin=1.2in]{geometry}
\usepackage{amsmath,amssymb,amsfonts,graphicx,amsthm,nicefrac,mathtools,bm}
\usepackage[hidelinks]{hyperref}
\usepackage[english]{babel}
\usepackage{times}
\usepackage[T1]{fontenc}
\usepackage{bbm}
\usepackage{color,xcolor}
\usepackage{xspace}
\usepackage{rotating,multirow}
\usepackage{comment}
\usepackage[ruled,vlined]{algorithm2e}
\usepackage{algorithmicx}
\usepackage{booktabs}

%----------------------------------------------------------
% Theorem etc macros
%----------------------------------------------------------
\newtheorem{theorem}{Theorem}

\newtheorem{lemma}{Lemma}

\theoremstyle{definition}

\theoremstyle{remark}

\makeatletter
\newtheorem*{rep@theorem}{\rep@title}
\newcommand{\newreptheorem}[2]{%
\newenvironment{rep#1}[1]{%
 \def\rep@title{#2 \ref{##1}}%
 \begin{rep@theorem}}%
 {\end{rep@theorem}}}
\makeatother
\newreptheorem{theorem}{Theorem}
\newreptheorem{lemma}{Lemma}
\newreptheorem{corollary}{Corollary}
\newreptheorem{proposition}{Proposition}

%----------------------------------------------------------
% Reference macros
%----------------------------------------------------------

%----------------------------------------------------------
% Math operator macros
%----------------------------------------------------------

 % Convex hull
 % Conic hull

%----------------------------------------------------------
% Other macros
%----------------------------------------------------------

 % distance

 % Inner product

\newcommand{\PP}[1]{\mathbb{P}\left\{{#1}\right\}} % Probability
 % Probability
 % Expectation

 % Conditional expectation
 % Conditional probability
 % Variance

\newcommand{\ignore}[1]{}
 % projection

 % differential

%----------------------------------------------------------
% Document-specific macros & settings
%----------------------------------------------------------

\newcommand{\ejc}[1]{\textcolor{red}{[EJC: #1]}}

\newcommand{\rfb}[1]{\textcolor{teal}{[RFB: #1]}}

\newcommand{\alphalo}{\alpha_{\text{\rm lo}}}
\newcommand{\alphahi}{\alpha_{\text{\rm hi}}}

\def\PP{{\mathbb P}}
\def\RR{{\mathbb R}}
\def\NN{{\mathbb N}}

\mathtoolsset{showonlyrefs}

\usepackage{tikz}
\usetikzlibrary{calc}

\author{Yaniv Romano\thanks{Department of Statistics, Stanford University} \quad Rina Foygel Barber\thanks{Department of Statistics, University of Chicago} \quad Chiara Sabatti\footnotemark[1] \ \thanks{Department of Biomedical Data Science, Stanford University} \quad  
Emmanuel J.~Cand{\`e}s\footnotemark[1] \ \thanks{Department of Mathematics, Stanford University} }
\title{With Malice Towards None:\\
Assessing Uncertainty via Equalized Coverage}

\date{August, 2019}

\begin{document}

\maketitle

\begin{abstract}

An important factor to guarantee a fair use of data-driven recommendation systems is that we should be able to communicate their uncertainty to decision makers.
This can be accomplished by constructing prediction intervals, which provide an  intuitive measure of the limits of predictive performance.
To support equitable treatment, we force the construction of such intervals to be unbiased in the sense that their coverage must be equal across all protected groups of interest. We present an operational methodology that achieves this goal by offering rigorous distribution-free coverage guarantees holding in finite samples. 
Our methodology, {\em equalized coverage}, is flexible as it can be viewed as a wrapper around any predictive algorithm. We test the applicability of the proposed framework on real data, demonstrating that equalized coverage constructs unbiased prediction intervals, unlike competitive methods.

\end{abstract}

\section{Introduction}

\subsection{The problem of equitable treatment}

We are increasingly turning to machine learning systems to support
human decisions. While decision makers may be subject to
many forms of prejudice and bias, the promise and hope is that
machines would be able to make more equitable decisions. Unfortunately, whether because they are fitted 
%\rfb{fitted?}\yr{fixed}
on already biased data or otherwise, there are concerns that some of these data driven recommendation systems treat members of different classes differently, perpetrating biases, providing different degrees of utilities, and inducing disparities. The  examples that have emerged are quite varied:
%\footnote{The notion of equalized coverage, advocated in this paper, is useful regardless of whether the initial disparity is due to factors of inequality/bias, or instead due to genetic risk.}
\begin{enumerate}
	\item \label{item:criminal_justice} \textbf{Criminal justice}: courts in the United States use COMPAS---a commercially available algorithm to assess a criminal defendant's likelihood of becoming a recidivist---to help them decide who should receive parole, based on records collected through the criminal justice system. In 2016 ProPublica analyzed COMPAS and ``found that black defendants were far more likely than white defendants to be incorrectly judged to be at a higher risk of recidivism, while white defendants were more likely than black defendants to be incorrectly flagged as low risk'' \cite{dieterich2016compas}.\footnote{\href{https://www.propublica.org/article/machine-bias-risk-assessments-in-criminal-sentencing}{https://www.propublica.org/article/machine-bias-risk-assessments-in-criminal-sentencing}}
	\item \label{item:face_recognition} \textbf{Recognition system}: the department of motor vehicles (DMV) uses facial recognition tools to detect people with false identities, by comparing driver's license or ID photos with other DMV images on file. In a related context, the authors of \cite{buolamwini2018gender} evaluated the performance of three commercial classification systems that use facial images to predict individuals' gender, and reported that the overall classification accuracy on 
	%\rfb{``of'' should perhaps be ``on'' or ``for''? i.e., ``of'' sounds like we are discussing these individuals' ability to perform prediction}\yr{fixed}
	male individuals was higher than female individuals. They also found that the predictive performance on 
	%\rfb{same}\yr{fixed}
	lighter-skinned individuals was higher than darker-skinned individuals.
	\item \label{item:college_admission} \textbf{College admissions}: a college admission office may be interested in a new algorithm for predicting the college GPA of a candidate student at the end of their Sophomore year, by using features such as high-school GPA, SAT scores, AP courses taken and scores, intended major, levels of physical activity, and so on.  On a similar matter, the work reported in \cite{gardner2019evaluating} studied various data-driven algorithms that aim to predict whether a student will drop out from a massive open online course (MOOC).
	Using a large data set available from \cite{gardner2018morf}, the authors found that in some cases there are noticeable differences between the models' predictive performance on male students compared to female students.

% 	The work reported in \cite{kleinberg2018algorithmic} follows the National Education Longitudinal Study of 1988 and uses a related publicly available data set that contains information about college students and their GPA. Their study suggests that a predictive model fitted on this data tends to predict higher college grades for white students compared to black students. \rfb{Does this mean, higher raw prediction, or higher prediction relative to the student's actual grade?}\yr{I am still trying to figure this out, at the moment I think it will be safer to keep it vague.}
	
	\item \label{item:disease_risk} \textbf{Disease risk}: 
	healthcare providers may be interested in predicting the chance that an individual develops certain disorders.
	Diseases with a genetic component have different frequencies in different human populations, reflecting the fact that disease causing mutations arose at different times and in individuals residing in different areas: for example, Tay-Sachs disease is approximately 100 times more common in infants of Ashkenazi Jewish ancestry (central-eastern Europe) than in non-Jewish infants \cite{kaback1977tai}. 
	The genotyping of DNA polymorphisms can lead to more precise individual risk assessment  than that derived from simply knowing which ethnic group the individual belongs to. However, given our still partial knowledge of the disease causing mutations and their prevalence in different populations, the precision of these estimate varies substantially across ethnic groups. For instance, the study reported in [13] found a preference for European genetic variants over non-European variants in two genomic databases that are widely used by clinical geneticists (this reflects the fact that most studies have been conducted on European populations). Relying on this information only would result in predictions that are more accurate for individuals of European descent than for others. 
	%The Clinical Genome Resource (ClinGen) Ancestry and Diversity Working Group highlights the need to develop guidance on race, ethnicity, and ancestry (REA) in data collection and use in clinical genomics to avoid discriminatory practices and disparities in health delivery \cite{imperative}.
\end{enumerate}
%As we can see from the breadth of these examples, the notion of equalized coverage \color{red} CHIARA: what notion of equalized coverage? I do not think we have defined it yet. I would simply say that ``the breadth of these examples underscores how data must be interpreted with care. Indeed, policymakers have
%issued a call that''\color{black}
%is useful regardless of whether the disparity is due to factors of inequality/bias, or instead due to genetic risk.

%Data must be treated with care and policymakers have
%issued a call that~\cite{WhiteHouse}

The breadth of these examples underscores how data must be interpreted with care; the method that is advocated in this paper is useful regardless of whether the disparity is due to factors of inequality/bias, or instead due to genetic risk. Indeed, policymakers have
issued a call that~\cite{WhiteHouse}
\begin{quote}
	\emph{``we must uphold our fundamental values so these systems are neither
		destructive nor opportunity limiting. [...]  In order to ensure that
		growth in the use of data analytics is matched with equal innovation to
		protect the rights of Americans, it will be important to support
		research into mitigating algorithmic discrimination, building systems
		that support fairness and accountability, and developing strong data
		ethics frameworks.''}
\end{quote}
This is a broad call, that covers multiple aspects of data collection, mining and interpretation; clearly, a response requires a multi-faced approach. 
Encouragingly, the machine learning community is beginning to respond to this challenge.  A major area of study has been to propose mathematical definitions of appropriate notions of fairness
\cite{dwork2012fairness,hardt2016equality,dieterich2016compas,zafar2017fairness,hebert2018multicalibration, kim2018multiaccuracy} or algorithmic models of fairness \cite{Kusner2017}. In many cases, these definitions are an attempt to articulate in mathematical terms what it means not to discriminate on the basis of ``protected characteristics'';
the U.~S.~law identify these as sex, race, age, disability, color, creed, national origin, religion, or genetic information. 
%these characteristic are identified by the U.~S.~law and include sex, race, age, disability, color, creed, national origin, religion, or genetic information. 
Now, discrimination can take many forms, and it is not surprising that it might be difficult to identify one analytical property that detects it in every context. Moreover, the call above is broader than the specific domains where discrimination is forbidden by law and invites us to develop analytical frameworks that guarantee an ethical use of data.

\subsection{Responses from the machine learning community}

To understand the complexity of the problem, it is useful to consider the  task  of predicting the value of $Y$, a binary $\{0,1\}$ variable, with a guess $\hat{Y}$. We assume that $Y=1$ represents a more ``favorable'' state, and that the value of $\hat{Y}$ will influence the decider, so that predicting $\hat{Y}=1$ for some individuals gives them an advantage.
In this context, $\PP(\hat{Y}=0 \mid Y=1)$, the {\em false negative rate}, represents the probability with which an opportunity is denied to a ``well deserving'' individual.
%This is an important error rate to control in the case of a parole \rfb{do we mean this as an example? e.g., could write, ``This is an important error rate to control in scenarios such as deciding parole''}
%(see Example \ref{item:criminal_justice}).
%\rev{
It is obvious that this is a critical error rate to control in scenarios such as deciding parole (see Example \ref{item:criminal_justice}):
%,where the decisions being made pertain to an individual's fundamental right to freedom.
freedom is a fundamental right, and nobody should be deprived of it needlessly.
%} \ejc{I prefer the earlier phrasing.} 
We then wish to require that $\PP(\hat{Y}=0 \mid Y=1, A=a)$ is equal across values of the protected attribute $A$ \cite{hardt2016equality}. In the case of distributions of goods (as when giving a loan), one might argue for parity of other measures such as $\PP(\hat{Y}=1\mid  A=a)$ which would guarantee that resources are distributed equally across the different population categories \cite{dwork2012fairness,feldman2015certifying,zafar2017fairnessconstraints}. Indeed, these observations are at the bases of two notions of fairness considered in the literature.
%\yr{SB and RFB suggested to shorten the next paragraph.}

Researchers have noted several problems with fairness measures that, as the above, ask for (approximate) parity of some statistical measure across all of these groups. Without
%the pretense of
providing
a complete discussion,
%\rfb{this phrasing sounds a bit odd / negative}\yr{replaced ``the pretense of'' with ``providing''}
we list some of these problems here. (a) To begin with, it is usually unclear how to design algorithms that would actually obey these notions of fairness from finite samples, especially in situations where the outcome of interest or protected attribute is continuous. (b) Even if we could somehow
``operationalize'' the fairness program, these
measures are usually incompatible: it is provably
impossible to design an algorithm that obeys 
all notions of
%more than one notion of
fairness simultaneously~\cite{chouldechova2017fair,kleinberg2017inherent}.
(c) This is particularly troublesome, as the appropriate measure appears to be context dependent.
%For example, the true positive and false negative rates might not alwayand their importance in other contexts without making assumptions on how decisions will be made. 
In Example \ref{item:disease_risk} above, where $Y=1$ corresponds to having Tay-Sachs, whose rate differs across populations. Due to the unbalanced nature of the disease, one would expect the predictive model to have a lower true positive rate for non-Jewish infants than that of Ashkenazi Jewish infants (for which the disease is much more common). Here, forcing parity of true positive rates \cite{hardt2016equality} would conflict with accurate predictions for each group \cite{hebert2018multicalibration}. (d)
Finally, and perhaps more importantly, researchers have argued that enforcing frequently discussed fairness criteria ``can often harm the very groups that these measures were designed to protect'' \cite{corbett2018measure}.

%\yradd{For instance, consider the case of Example \ref{item:disease_risk}, where Tay-Sachs disease has different rates across different populations. Due to the unbalanced nature of the problem, one may expect the predictive model to have a lower true positive rate for non-Jewish infants than that of Ashkenazi Jewish infants (for which the disease is much more common). Here, forcing parity of true positive rate \cite{hardt2016equality} would conflict with accurate predictions for each group \cite{hebert2018multicalibration}.}

In light of this, it has been suggested to decouple the statistical problem of risk assessment from the policy problem of taking actions and designing interventions. Quoting from \cite{corbett2018measure}, ``an algorithm might (correctly) infer that a defendant has a 20\% chance of committing a violent crime if released, but that fact does not, in and of itself, determine a course of action.'' Keeping away from policy then, how can we respond to the call in \cite{WhiteHouse} and provide a policymaker the best information that can be learned from data while supporting equitable treatment?
Our belief is that multiple approaches will be needed, and with this short paper our aim is to introduce an additional tool to evaluate the performance of algorithms across different population groups. 

\subsection{This paper: equalized coverage}

\newcommand{\Ylo}{\hat{Y}_{\text{lo}}}
\newcommand{\Yhi}{\hat{Y}_{\text{hi}}}

One fundamental way to support data ethics is not to overstate the power of algorithms and data-based predictions, but rather always accompany these with measures of uncertainty that are easily understandable by the user.
This can be done, for example, by providing a plausible
range of predicted values for the outcome of interest. 
For instance, consider a recommendation system for college admission (Example \ref{item:college_admission}), not knowing about the accuracy of the prediction algorithm, we would like to produce for, each student, a predicted GPA interval $[\Ylo, \ \Yhi]$ obeying the following two properties: the interval should be faithful in the sense that the true \emph{unknown} outcome $Y$ lies within the predicted range 90\% of the time, say; second, this should be unbiased in that {\em the average coverage should be the same within each group}. 

Such a predictive interval has the virtue of informing
the decision maker about the evidence machine learning can provide while being explicit about the limits of predictive performance. If the
interval is long, it just means that the predictive model can say little.
%% And that perhaps more data should be collected.
Each group enjoys identical treatment, receiving \emph{equal
	coverage} (e.g., 90\%, or any level the decision maker wishes to achieve). Hence, the results of data analysis are unbiased to all. In particular, if the larger sample size available for one group overly influences the fit, leading to poor performance in the other groups,  
the prediction interval will 
make this immediately apparent through much wider confidence bands for the groups with fewer samples. Prediction intervals with equalized coverage, then, naturally assess and communicate 
the fact that an algorithm has varied levels of performance on different subgroups. 
One might consider  equal length equalized coverage intervals as a landmark of truly ``fair'' treatment: realizing that the data at hand leads to intervals of different lengths should motivate the user to enlarge the data collection. %, which one might consider   %, which is recognized as an important topic in the literature on algorithmic fairness. 

It seems a priori impossible to present information to the policymaker in
such a compelling fashion without a strong model for dependence of the
response $Y$ on the features $X$ or protected attributes $A$.  In our college admission example, one may have trained a wide array of complicated predictive algorithms such as random forests or deep neural networks, each with its own suite of parameters; for all practical purposes, the fitting procedure may just as well be a black box. The surprise is that such a feat is possible under no assumption other than that of having samples that are drawn exchangeably---e.g., they may be drawn i.i.d.---from a population of interest. We propose a concrete 
%meta-algorithm,
procedure,
which acts as a wrapper around the predictive model, to produce perfectly valid prediction intervals that provably satisfy the equalized coverage constraint for any black box algorithm, sample size and distribution.
Such a
%meta-algorithm can be obtained
procedure can be formulated
by refining tools
from conformal inference, a general methodology for constructing prediction intervals~\cite{vovk1999machine,papadopoulos2002inductive,vovk2005algorithmic,vovk2009line,lei2013distribution,lei2018distribution,romano2019conformalized}. Our contribution is nevertheless a little outside of classical conformal inference as we seek a form of {\em conditional rather than marginal}  coverage guarantee~\cite{vovk2012conditional,lei2014distribution,barber2019limits}.

The specific procedure we suggest to construct predictive intervals with equal coverage, then, supports equitable treatment in an additional dimension. Specifically, we use the same learning algorithm for all individuals, borrowing strength from the entire population, and leveraging the entire dataset, while adjusting ``global'' predictions to make ``local'' confidence statements valid for each group. Such a training strategy may also improve the statistical efficiency of the predictive model, as illustrated by our experiments in Section \ref{sec:experiments}. Of course, our approach comes with limitations as well: we discuss these and possible extensions in Section \ref{sec:ec_limitations}.

\section{Equalized coverage} \label{sec:ec}

Let $\{(X_i, A_i, Y_i)\}$, $i = 1, \ldots, n$, be some training data where the vector $X_i \in \RR^p$ may contain the sensitive attribute $A_i \in \{0,1,2,\dots\}$ as one of the features. Imagine we hold a test point with known $X_{n+1}$ and $A_{n+1}$ and aim to construct a prediction interval $C(X_{n+1},A_{n+1}) \subseteq \RR$ which contains the unknown response $Y_{n+1} \in \RR$ with probability at least $1-\alpha$ on average within each group; here, $0 < 1-\alpha < 1$ is a desired coverage level. {(Our ideas extend to categorical responses in a fairly straightforward fashion; for the sake of simplicity, we do not consider these extensions in this paper.)} Formally, we assume that the training and test samples $ \{(X_i,A_i,Y_i)\}_{i=1}^{n+1} $ are drawn exchangeably from some arbitrary and unknown distribution $P_{XAY}$, and we wish that our  prediction interval obeys the following property:
\begin{align} \label{eq:cp_coverage}
\PP\{Y_{n+1} \in C(X_{n+1},A_{n+1}) \mid A_{n+1}=a\} \geq 1-\alpha
\end{align}
for all $a$. Above, the probability is taken over the $n$ training samples and the test case.
%\rfb{need to specify that this holds for all $a$}\yr{It appears in the next sentence, isn't it clear?}
Once more, \eqref{eq:cp_coverage} must hold for any distribution $P_{XAY}$, sample size $n$, and regardless of the group identifier $A_{n+1}$. (While this only ensures that coverage is {\em at least} $1-\alpha$ for each group---and, therefore, the groups may have unequal coverage level---we will see that under mild conditions the coverage can also be upper bounded to lie very close to the target level $1-\alpha$.)

In this section we present a methodology to achieve \eqref{eq:cp_coverage}. Our solution builds on classical conformal prediction \cite{vovk2005algorithmic,lei2018distribution} and the recent conformalized quantile regression (CQR) approach  \cite{romano2019conformalized} originally designed to construct \textbf{marginal} distribution-free prediction intervals (see also \cite{kivaranovic2019adaptive}). CQR combines the rigorous coverage guarantee of conformal prediction with the statistical efficiency of quantile regression \cite{koenker1978regression} and has been shown to be adaptive to the local variability of the data distribution under study. 
Below, we present a modification of CQR obeying~\eqref{eq:cp_coverage}. Then in Section~\ref{sec:group-conformal}, we draw connections to conformal prediction \cite{papadopoulos2002inductive,lei2018distribution} and explain how classical conformal inference can also be used to construct prediction intervals with equal coverage across protected groups.\footnote{We build on the split conformal methodology \cite{papadopoulos2002inductive,lei2018distribution} rather than its transductive (or full) version \cite{vovk2005algorithmic} due to the high computational cost of the latter. We refer the reader to \cite{vovk2005algorithmic,lei2018distribution} for more details about transductive conformal prediction, its advantages and limitations.
%\rfb{There's a terminology mixup here --- ``inductive'' = split conformal ([18] uses the term ``inductive''); ``transductive'' = full conformal}\yr{fixed}
}

% Quantile regression offers a direct approach for constructing \textbf{conditional} prediction intervals. This is done by fitting two conditional quantile functions on the training data, with upper and lower quantile levels that equal to $ \alphalo = \alpha/2$ and $ \alphahi = 1 - \alpha/2 $, say. These estimates can potentially form prediction intervals with $1-\alpha$ coverage rate. However, in practice, the estimated bands can largely under- or over-cover the test target variables. In general, there are no finite sample coverage guarantee for quantile regression and the validity of this approach is guaranteed for specific models, and under asymptotic and regularity conditions \cite{steinwart2011estimating,takeuchi2006nonparametric,meinshausen2006quantile}.

Before describing the proposed method we introduce a key result in conformal prediction, adapted to our conditional setting. Variants of the following lemma appear in the literature \cite{vovk2012conditional,vovk2005algorithmic,lei2018distribution,barber2019conformal,romano2019conformalized}.
\begin{lemma} \label{lemma:quantile-inflation}
	Suppose the random variables $Z_1,\dots,Z_{m+1}$ are exchangeable conditional on $A_{m+1}=a$, and define
	$Q_{1-\alpha}$ to be the  $(1-\alpha)(1+1/m)\text{-th empirical quantile of } \left\{Z_i : 1\leq i \leq m \right\}$.
	For any $\alpha \in (0,1)$,
	\begin{equation}
	\PP\{Z_{m+1} \leq  Q_{1-\alpha} \mid A_{m+1}=a\} \geq \alpha.
	\end{equation}
	Moreover, if the random variables $Z_1,\dots,Z_{m+1}$ are almost surely distinct, then it also holds that 
	\begin{equation}
	\PP\{Z_{m+1} \leq Q_{1-\alpha} \mid A_{m+1}=a\} \leq \alpha + 1/(m+1).
	\end{equation}
\end{lemma}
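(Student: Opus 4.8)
The plan is to reduce the claim to the classical fact that, for exchangeable random variables, the rank of the ``new'' observation among all $m+1$ of them is uniform on $\{1,\dots,m+1\}$ (up to ties), and then to translate the event $\{Z_{m+1}\le Q_{1-\alpha}\}$ into a statement about that rank. All probabilities below are conditional on $A_{m+1}=a$, and the only property I use is that, under this conditioning, $(Z_1,\dots,Z_{m+1})$ is exchangeable.

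First I would rewrite the empirical quantile as an order statistic: since $(1-\alpha)(1+1/m)\cdot m=(1-\alpha)(m+1)$, the number $Q_{1-\alpha}$ is the $k$-th smallest element of $\{Z_1,\dots,Z_m\}$ with $k:=\ceil{(1-\alpha)(m+1)}$, with the convention $Q_{1-\alpha}=+\infty$ when $k>m$ (which happens precisely when $\alpha<1/(m+1)$). Next, setting $S:=|\{i\le m:Z_i<Z_{m+1}\}|$, I would verify the event identity $\{Z_{m+1}>Q_{1-\alpha}\}=\{S\ge k\}$ when $k\le m$: if $Z_{m+1}$ strictly exceeds the $k$-th smallest training point then it strictly exceeds at least $k$ of them, and conversely, if at least $k$ training points lie strictly below $Z_{m+1}$ then so does the $k$-th smallest of all $m$ training points, i.e.\ $Q_{1-\alpha}<Z_{m+1}$; and both events are empty when $k=m+1$. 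Therefore $\PP\{Z_{m+1}\le Q_{1-\alpha}\mid A_{m+1}=a\}=\PP\{S\le k-1\mid A_{m+1}=a\}$.

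The core of the argument is bounding $\PP\{S\le k-1\mid A_{m+1}=a\}$. For each $i$ put $S_i:=|\{j\in\{1,\dots,m+1\}:Z_j<Z_i\}|$, so $S_{m+1}=S$. The map $(Z_1,\dots,Z_{m+1})\mapsto(S_1,\dots,S_{m+1})$ is permutation-equivariant, so by exchangeability of the $Z_i$ the variables $S_1,\dots,S_{m+1}$ are exchangeable, hence identically distributed, conditionally on $A_{m+1}=a$. Since each of the $k$ smallest among the $m+1$ values has strictly fewer than $k$ values below it, the pointwise inequality $\sum_{i=1}^{m+1}\mathbbm{1}\{S_i\le k-1\}\ge k$ holds; taking conditional expectations and using that the $m+1$ summands share a common mean yields $(m+1)\,\PP\{S\le k-1\mid A_{m+1}=a\}\ge k\ge(1-\alpha)(m+1)$, i.e.\ $\PP\{Z_{m+1}\le Q_{1-\alpha}\mid A_{m+1}=a\}\ge 1-\alpha$. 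When the $Z_i$ are almost surely distinct the $j$-th smallest value has exactly $j-1$ values below it, so the counting inequality becomes the a.s.\ identity $\sum_{i=1}^{m+1}\mathbbm{1}\{S_i\le k-1\}=k$; hence for $k\le m$ one gets $\PP\{S\le k-1\mid A_{m+1}=a\}=k/(m+1)$, and combining with $k=\ceil{(1-\alpha)(m+1)}<(1-\alpha)(m+1)+1$ gives the upper bound, while the remaining case $k=m+1$ (where the probability equals $1$ and $\alpha<1/(m+1)$, so $(1-\alpha)+1/(m+1)\ge 1$) is checked directly.

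The step I expect to require the most care is the treatment of ties: one must make precise that permutation-equivariance of $i\mapsto S_i$ together with exchangeability of $(Z_1,\dots,Z_{m+1})$ is exactly what forces the common conditional law of the $S_i$, and must check that ties can only shrink the event $\{S\le k-1\}$ relative to the all-distinct case, so that the lower bound survives with ``$\ge$'' for general $Z_i$ while the matching ``$\le$'' genuinely needs the distinctness hypothesis. The rest is routine bookkeeping: the $Q_{1-\alpha}=+\infty$ boundary case, and reconciling ``$\beta$-th empirical quantile'' with the $\ceil{\beta m}$-th order statistic under the rounding convention in force.
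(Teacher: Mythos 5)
Your proof is correct, and since the paper offers no proof of this lemma---it simply defers to the cited conformal-prediction literature---the right comparison is to the standard argument found there, which is exactly what you have reconstructed: identify $Q_{1-\alpha}$ with the $k$-th order statistic for $k=\ceil{(1-\alpha)(m+1)}$ (with $Q_{1-\alpha}=+\infty$ when $k>m$), translate $\{Z_{m+1}\le Q_{1-\alpha}\}$ into the rank event $\{S\le k-1\}$, use permutation-equivariance of the rank map to get exchangeability of $S_1,\dots,S_{m+1}$ conditional on $A_{m+1}=a$, and exploit the deterministic count $\sum_{i=1}^{m+1}\mathbbm{1}\{S_i\le k-1\}\ge k$, with equality under almost-sure distinctness. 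Two remarks. First, what you actually prove is $1-\alpha\le \PP\{Z_{m+1}\le Q_{1-\alpha}\mid A_{m+1}=a\}\le 1-\alpha+1/(m+1)$; the lemma as printed writes $\alpha$ and $\alpha+1/(m+1)$ on the right-hand sides, but this is a typo in the statement---the lemma is invoked in precisely the form you proved in \eqref{eq:En+1} within the proof of Theorem~\ref{thm:validity_qreg}, so you have established the intended (and correct) claim. Second, in your closing paragraph the phrase ``ties can only shrink the event $\{S\le k-1\}$'' has the direction reversed: ties can only decrease $S$ and hence \emph{enlarge} that event, which is exactly why the lower bound survives while the upper bound genuinely needs distinctness. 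The slip is harmless, since your main argument never relies on that sentence---the pointwise inequality $\sum_i\mathbbm{1}\{S_i\le k-1\}\ge k$ already handles ties correctly---and your treatment of the boundary case $k=m+1$ (equivalently $\alpha<1/(m+1)$) is also right.
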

%\rfb{I think the upper bound should be $\alpha + 1/(m+1)$ (see [22])}\yr{fixed, good catch!}

\subsection{Group-conditional conformalized quantile regression (CQR)} \label{sec:group-cqr}

\begin{algorithm}[t]
	\caption{Group-conditional CQR.}	\label{alg:split_qreg}
	
	\textbf{Input:}
	\begin{algorithmic}
		\State Data $(X_i, A_i, Y_i) \in \RR^p \times \NN \times \RR, \ 1\leq i \leq n$.
		\State Nominal coverage level $1-\alpha \in (0,1)$.
		\State Quantile regression algorithm $ \mathcal{A} $.
		\State Training mode: joint/groupwise.
		\State Test point $ X_{n+1}={x} $ with sensitive attribute $A_{n+1}=a$.
	\end{algorithmic}
	
	\textbf{Process:}
	\begin{algorithmic}
		\State Randomly split $ \left\lbrace 1,\dots,n \right\rbrace  $ into two disjoint sets $ \mathcal{I}_1 $ and $ \mathcal{I}_2 $.
		\State \emph{\textbf{If}} \emph{joint training}:
		\State \quad Fit quantile functions on the whole proper training set:
		$ \left\lbrace  \hat{q}_{\alphalo}, \hat{q}_{\alphahi} \right\rbrace \leftarrow \mathcal{A}(\left\lbrace (X_i, Y_i): i \in \mathcal{I}_1 \right\rbrace) $.
		\State \emph{\textbf{Else}} use \emph{groupwise training}:
		\State \quad Fit quantile functions on the proper training examples from group $A_{n+1}=a$:
		\State \quad  $\left\lbrace  \hat{q}_{\alphalo}, \hat{q}_{\alphahi} \right\rbrace \leftarrow \mathcal{A}(\left\lbrace (X_i, Y_i): i \in \mathcal{I}_1 \text{ and } A_i=a  \right\rbrace) $.
		% 		\State Extract $\mathcal{I}_2(a)$ for group $A_{n+1}=a$, as in equation \eqref{eq:group_cal_points}.
		\State Compute $ E_i $ for each $i \in \mathcal{I}_2(a)$, as in \eqref{eq:qreg_conformity}.
		\State Compute $Q_{1-\alpha}(E, \mathcal{I}_2(a))$, the $(1-\alpha)(1+1/|\mathcal{I}_2(a)|)$-th empirical quantile of $\left\{E_i : i \in \mathcal{I}_2(a) \right\}$.
	\end{algorithmic}
	
	\textbf{Output:}
	\begin{algorithmic}
		\State Prediction interval $ \hat C(x,a) = \left[ \hat{q}_{\alphalo}(x) - Q_{1-\alpha}(E, \mathcal{I}_2(a)) , \ \hat{q}_{\alphahi}(x) + Q_{1-\alpha}(E, \mathcal{I}_2(a)) \right] $ for the unknown response $Y_{n+1}$.
	\end{algorithmic}
	
\end{algorithm}

% \rfb{Suggested change in notation for $\hat{C}(x)$ and $C(x,a)$ $\rightarrow$ $\hat{C}_{\text{init}}(x)$ and $\hat{C}(x,a)$ (or in any case, I think $C(x,a)$ should have a hat)}\yr{Fixed}

Our method starts by randomly splitting the $n$ training points into two disjoint subsets; a \emph{proper training set} $ \left\lbrace (X_i,A_i,Y_i): i \in \mathcal{I}_1 \right\rbrace  $ and a \emph{calibration set} $ \left\lbrace (X_i,A_i,Y_i): i \in \mathcal{I}_2 \right\rbrace  $. Then, consider any algorithm $\mathcal{A}$ for quantile regression that estimates conditional quantile functions from observational data, such as quantile neural networks \cite{taylor2000quantile} (described in Appendix~\ref{app:qnet}). To construct a prediction interval with $1-\alpha$ coverage, fit two conditional quantile functions on the proper training set,
\begin{align} \label{eq:learning}
\left\lbrace \hat{q}_{\alphalo}, \hat{q}_{\alphahi} \right\rbrace \leftarrow \mathcal{A}(\left\lbrace (X_i, Y_i): i \in \mathcal{I}_1 \right\rbrace),
\end{align}
at levels $ \alphalo = \alpha/2$ and $ \alphahi = 1 - \alpha/2 $, say, and form a first estimate of the prediction interval $ \hat{C}_{\text{init}}(x) = [\hat{q}_{\alphalo}(x), \ \hat{q}_{\alphahi}(x)] $ at $X=x$. $\hat{C}_{\text{init}}(x)$ is constructed with the goal that a new case with covariates $x$ should have probability $1-\alpha$ of its response lying in the interval $\hat{C}_{\text{init}}(x)$, but
the interval $\hat{C}_{\text{init}}(x)$ was empirically shown to largely under- or over-cover the test target variable \cite{romano2019conformalized}. (Quantile regression algorithms are not supported by finite sample coverage guarantees \cite{steinwart2011estimating,takeuchi2006nonparametric,meinshausen2006quantile,zhou1996direct,zhou1998statistical}.)

% These functions provide a first interval estimation $ \hat{C}(x) = [\hat{q}_{\alphalo}(x), \ \hat{q}_{\alphahi}(x)] $ at a point $X = x$. Note that while equation \eqref{eq:learning} fits a joint predictive model for all groups, it is also possible to fit a \emph{separate} regression function per each group, leading to a series of group-dependent models.

% fit low and high conditional quantile functions on the proper training examples by running a quantile regression algorithm $\mathcal{A}$ (e.g. quantile random forests \cite{meinshausen2006quantile} or quantile neural networks \cite{taylor2000quantile}):
% \begin{align} \label{eq:learning}
% 	\left\lbrace \hat{q}_{\alphalo}, \hat{q}_{\alphahi} \right\rbrace \leftarrow \mathcal{A}(\left\lbrace (X_i, Y_i): i \in \mathcal{I}_1 \right\rbrace).
% \end{align}
% Above, $\hat{q}_{\alphalo}$ and $\hat{q}_{\alphahi}$ are the estimated quantile regression models, with $ \alphalo = \alpha/2$ and $ \alphahi = 1 - \alpha/2 $, say. These functions provide a first interval estimation $ \hat{C}(x) = [\hat{q}_{\alphalo}(x), \ \hat{q}_{\alphahi}(x)]  $ at a point $X = x$. Note that while equation \eqref{eq:learning} results in a joint predictive model for all groups, it is possible to fit a \emph{separate} regression function per each group, leading to a series of group-dependent models.

This motivates the next step that borrows ideas from split conformal prediction \cite{papadopoulos2002inductive,lei2018distribution} and CQR \cite{romano2019conformalized}. Consider a group $A=a$, and compute the empirical errors (often called conformity scores) achieved by the first guess $\hat{C}_{\text{init}}(x)$. This is done by extracting the calibration points that belong to that group,
\begin{align} \label{eq:group_cal_points}
\mathcal{I}_2(a) = \{ i : i\in \mathcal{I}_2 \ \text{and} \ A_i=a\},
\end{align}
and evaluating 
\begin{align} \label{eq:qreg_conformity}
E_i := \max\{\hat{q}_{\alphalo}(X_i) - Y_i, Y_i - \hat{q}_{\alphahi}(X_i)\}, \quad i \in \mathcal{I}_2(a).
\end{align}
This step provides a family of conformity scores $ \left\lbrace E_i: i \in \mathcal{I}_2(a) \right\rbrace  $ that are restricted to the group $A=a$. Each score 
measures the signed distance of the target variable $Y_i$ to the boundary of the interval $\hat{C}_{\text{init}}(x)$; if $Y_i$ is located outside the initial interval, then $E_i>0$ is equal to the distance to the closest interval endpoint. If $Y_i$ lies inside the interval, then $E_i\leq0$ and its magnitude also equals the distance to the closest endpoint. As we shall see immediately below, these scores may serve to measure the quality of the initial guess $\hat{C}_{\text{init}}(\cdot)$ and used to calibrate it as to obtain the desired distribution-free coverage.
%Crucially, our approach makes no assumptions on the model class of $\hat{C}_{\text{init}}(\cdot)$, its accuracy or coverage.
Crucially, our approach makes no assumptions on the form or the properties of $\hat{C}_{\text{init}}(\cdot)$---it may come from any model class, and is not required to meet any particular level of accuracy or coverage. Its role is to provide a ``base algorithm'' around which we will build our predictive intervals.

%\rfb{I think we should state explicitly that we make no assumptions on the accuracy / coverage / model class / etc of $\hat{C}_{init}$}\yr{added}

% In words, suppose that $A$ is a $\{0,1\}$ variable, the last step results in two distinct sets of conformity scores $ \left\lbrace E_i: i \in \mathcal{I}_2^0 \right\rbrace  $ and $ \left\lbrace E_i: i \in \mathcal{I}_2^1 \right\rbrace  $, where each element measures the signed distance of the target variable $Y_i$ to its closest conditional quantile estimate.

Finally, the following crucial step builds a prediction interval for the unknown $ Y_{n+1} $ given $X_{n+1}=x$ and $A_{n+1}=a$. This is done by computing
\begin{align} \label{eq:qreg_quantilde_split}
Q_{1-\alpha}(E, \mathcal{I}_2(a)) :=  (1-\alpha)(1+1/|\mathcal{I}_2(a)|)\text{-th empirical quantile of} \left\{E_i : i \in \mathcal{I}_2(a) \right\},
\end{align}
which is then used to calibrate the first interval estimate as follows:
\begin{align} \label{eq:c_cplit_qreg}
\hat C(x, a) := \left[ \hat{q}_{\alphalo}(x) - Q_{1-\alpha}(E, \mathcal{I}_2(a)) , \ \hat{q}_{\alphahi}(x) + Q_{1-\alpha}(E, \mathcal{I}_2(a)) \right].
\end{align}

Before proving the validity of the interval in \eqref{eq:c_cplit_qreg}, we pause to present two possible training strategies for the initial quantile regression interval $\hat{C}_{\text{init}}(x)$. We refer to the first as \emph{joint training} as it uses the whole proper training set to learn a predictive model, see \eqref{eq:learning}. 
The second approach, which we call \emph{groupwise training}, constructs a prediction interval for $Y_{n+1}$ separately for each group; that is, for each value $a=0,1,2,\dots$, we fit a regression model to all training examples with $A_{n+1}=a$.
%The second approach we call \emph{groupwise training} constructs a prediction interval for $Y_{n+1}$ by fitting a regression model to proper training examples that share the same group identifier $A_{n+1}=a$.
These two variants of the CQR procedure are summarized in Algorithm~\ref{alg:split_qreg}.  While the statistical efficiency of the two approaches can differ (as we will see in Section \ref{sec:experiments}), both are guaranteed to attain valid group-conditional coverage for any data distribution and regardless of the choice or accuracy of the quantile regression estimate.

% The next crucial step calibrates the plug-in interval per each group. This is done by computing
% \begin{align} \label{eq:qreg_quantilde_split}
% Q_{1-\alpha}(E, \mathcal{I}_2^a) :=  (1-\alpha)(1+1/|\mathcal{I}_2^a|)\text{-th empirical quantile of} \left\{E_i : i \in \mathcal{I}_2^a \right\},
% \end{align}
% which is then used to construct
% \begin{align} \label{eq:c_cplit_qreg}
% C(X_{n+1}, A_{n+1}) = \left[ \hat{q}_{\alphalo}(X_{n+1}) - Q_{1-\alpha}(E, \mathcal{I}_2^a) , \ \hat{q}_{\alphahi}(X_{n+1}) + Q_{1-\alpha}(E, \mathcal{I}_2^a) \right],
% \end{align}
% where $X_{n+1}$ is an observed test point with a known sensitive attribute $A_{n+1}=a$. As stated in Theorem \ref{thm:validity_qreg} below, the interval in \eqref{eq:c_cplit_qreg} attains valid group-conditional coverage. Crucially, the following statement holds in finite sample, for any distribution and regardless of the choice or accuracy of the quantile regression estimate.

\begin{theorem} \label{thm:validity_qreg}
	If $ (X_i, A_i, Y_i) $, $ i=1,\dots, n+1 $ are exchangeable, then the prediction interval $ \hat{C}(X_{n+1},A_{n+1}) $ constructed by Algorithm \ref{alg:split_qreg} obeys 
	\begin{equation}
	\PP\{Y_{n+1} \in \hat C(X_{n+1}, A_{n+1}) \mid A_{n+1}=a \} \geq 1-\alpha
	\end{equation}
	{for each group $a=0,1,2,\dots$.}
	Moreover, if the conformity scores $\{E_i : \ i\in\mathcal{I}_2(a) \cup \{n+1\}\}$ for $A_{n+1}=a$ are almost surely distinct, then the group-conditional prediction interval is nearly perfectly calibrated:
	\begin{equation}
	\PP\{Y_{n+1} \in \hat C(X_{n+1}, A_{n+1}) \mid A_{n+1}=a\} \leq 1-\alpha+\frac{1}{|\mathcal{I}_2(a)|+1}
	\end{equation}
	{for each group $a=0,1,2,\dots$.}
\end{theorem}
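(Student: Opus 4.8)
The plan is to reduce the statement to a single application of Lemma~\ref{lemma:quantile-inflation} applied to the conformity scores. The first step is the ``coverage $=$ small score'' reformulation: writing $E_{n+1} := \max\{\hat{q}_{\alphalo}(X_{n+1}) - Y_{n+1},\, Y_{n+1} - \hat{q}_{\alphahi}(X_{n+1})\}$ for the conformity score of the test point and abbreviating $Q := Q_{1-\alpha}(E,\mathcal{I}_2(a))$, the explicit form~\eqref{eq:c_cplit_qreg} shows that $Y_{n+1} \in \hat C(X_{n+1},A_{n+1})$ holds if and only if $\hat{q}_{\alphalo}(X_{n+1}) - Y_{n+1} \le Q$ and $Y_{n+1} - \hat{q}_{\alphahi}(X_{n+1}) \le Q$, i.e.\ if and only if $E_{n+1} \le Q$. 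So it suffices to bound $\PP\{E_{n+1} \le Q \mid A_{n+1}=a\}$ from below (and, in the second part, from above).

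Next I would fix the conditioning under which the lemma applies. Condition on the random split $(\mathcal{I}_1,\mathcal{I}_2)$ --- which is independent of the data --- and on the proper training sample $\{(X_i,A_i,Y_i): i \in \mathcal{I}_1\}$. Under this conditioning $\hat{q}_{\alphalo},\hat{q}_{\alphahi}$ are fixed deterministic maps $\RR^p \to \RR$; this holds verbatim in both the joint and the groupwise training modes (they differ only in which subset of $\mathcal{I}_1$ is fed to $\mathcal{A}$), so the two variants need no separate treatment. Hence, for every $i \in \mathcal{I}_2 \cup \{n+1\}$, the pair $(E_i,A_i)$ is a fixed measurable function of $(X_i,Y_i,A_i)$. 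Exchangeability of $\{(X_i,A_i,Y_i)\}_{i=1}^{n+1}$ is preserved upon conditioning on the index set $\mathcal{I}_1$ and on its observed values, so $\{(E_i,A_i): i \in \mathcal{I}_2 \cup \{n+1\}\}$ is exchangeable. Now condition further on the vector of group labels $\{A_i : i \in \mathcal{I}_2 \cup \{n+1\}\}$: this fixes the event $\{A_{n+1}=a\}$ and the identity of the random set $\mathcal{I}_2(a)$, and since any permutation of the indices $\mathcal{I}_2(a)\cup\{n+1\}$, extended by the identity on the complement, leaves this label vector unchanged (all those indices carry the label $a$), the sub-collection of scores $\{E_i : i \in \mathcal{I}_2(a) \cup \{n+1\}\}$ is exchangeable conditional on $A_{n+1}=a$.

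With this in hand, apply Lemma~\ref{lemma:quantile-inflation} with $Z_i \leftarrow E_i$, $m \leftarrow |\mathcal{I}_2(a)|$, and $A_{m+1} \leftarrow A_{n+1}$: since $Q$ is precisely the $(1-\alpha)(1+1/|\mathcal{I}_2(a)|)$-th empirical quantile of $\{E_i : i \in \mathcal{I}_2(a)\}$, the lemma yields $\PP\{E_{n+1} \le Q \mid A_{n+1}=a\} \ge 1-\alpha$, and, when the scores in $\mathcal{I}_2(a)\cup\{n+1\}$ are almost surely distinct, the matching upper bound $1-\alpha + 1/(|\mathcal{I}_2(a)|+1)$. (The degenerate cases --- $\mathcal{I}_2(a)=\emptyset$, or $(1-\alpha)(1+1/|\mathcal{I}_2(a)|) \ge 1$ so that $Q=+\infty$ and $\hat C = \RR$ --- only make the lower bound more trivially true.) Finally, I would average over the conditioning: the lower bound holds for every realization of $(\mathcal{I}_1,\mathcal{I}_2)$ and of the proper training set, hence also after integrating them out, giving $\PP\{Y_{n+1} \in \hat C(X_{n+1},A_{n+1}) \mid A_{n+1}=a\} \ge 1-\alpha$; the upper bound carries over identically, with $|\mathcal{I}_2(a)|$ read as its realized value (keeping the split fixed throughout if one wants a fully clean unconditional statement).

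I expect the only genuinely delicate point to be the two-stage exchangeability bookkeeping --- first stripping off $\mathcal{I}_1$, then conditioning on the label vector so that $\mathcal{I}_2(a)$ becomes a deterministic index set while the within-group scores stay exchangeable --- together with checking that the argument is uniform over the two training modes. The remaining ingredients (the ``$Y_{n+1} \in \hat C \iff E_{n+1} \le Q$'' rewrite, the invocation of Lemma~\ref{lemma:quantile-inflation}, and the de-conditioning) are routine.
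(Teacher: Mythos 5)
Your proposal is correct and follows essentially the same route as the paper's proof: rewrite the coverage event as $E_{n+1}\le Q_{1-\alpha}(E,\mathcal{I}_2(a))$, argue that the conformity scores within group $a$ together with the test score are exchangeable, and invoke Lemma~\ref{lemma:quantile-inflation} for both bounds. The only difference is that you spell out the conditioning on the data split, the proper training set, and the group-label vector (and the degenerate cases), which the paper leaves implicit.
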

\begin{proof}
	%Consider a test triplet $ (X_{n+1}, A_{n+1}=a, Y_{n+1}) $. 
	{Fix any group $a$.}
	Since our calibration samples are exchangeable, the conformity scores \eqref{eq:qreg_conformity} $ \{E_{i} : \ i \in \mathcal{I}_2(a)\}$ are also exchangeable. Exchangeability also holds when we add the test score $E_{n+1}$ to this list. Consequently, by Lemma~\ref{lemma:quantile-inflation},
	\begin{equation}
	\label{eq:En+1}
	1-\alpha \leq \PP(E_{n+1} \leq Q_{1-\alpha}(E, \mathcal{I}_2(a)) \mid A_{n+1}=a) \leq 1-\alpha+\frac{1}{|\mathcal{I}_2(a)|+1},
	\end{equation}
	where the upper bound holds under the additional assumption that $ \{E_{i} : \ i \in \mathcal{I}_2(a) \cup \{n+1\}\}$ are almost surely distinct, while the lower bound holds without this assumption. 
	
	To prove the validity of $ \hat C(X_{n+1},A_{n+1}) $ conditional on $A_{n+1}=a$, observe that, by definition,
	\[
	Y_{n+1} \in \hat C(X_{n+1},A_{n+1}) \quad \text{if and only if} \quad E_{n+1} \le Q_{1-\alpha}(E, \mathcal{I}_2(a)).
	\]
	Hence, the result follows from \eqref{eq:En+1}.

\end{proof}

\paragraph{Variant: asymmetric group-conditional CQR}

When the distribution of the conformity scores is highly skewed, the coverage error may spread asymmetrically over the left and right tails. In some applications it may be better to consider a variant of \mbox{Algorithm \ref{alg:split_qreg}} that controls the coverage of the two tails separately, leading to a stronger conditional coverage guarantee. To achieve this goal, we follow the approach from \cite{romano2019conformalized} and evaluate two separate empirical quantile functions:  one for the left tail,
\begin{align} \label{eq:qreg_quantilde_split_low_tail}
Q_{1-\alphalo}(E_{\mathrm{lo}}, & \mathcal{I}_2(a)) := \\ & (1-\alphalo)(1+1/|\mathcal{I}_2(a)|)\text{-th empirical quantile of} \left\{\hat{q}_{\alphalo}(X_i) - Y_i : i \in \mathcal{I}_2(a) \right\};
\end{align}
and the second for the right tail
\begin{align} \label{eq:qreg_quantilde_split_high_tail}
Q_{1-\alphahi}(E_{\mathrm{hi}}, & \mathcal{I}_2(a)) :=  \\ & (1-\alphahi)(1+1/|\mathcal{I}_2(a)|)\text{-th empirical quantile of} \left\{Y_i - \hat{q}_{\alphahi}(X_i): i \in \mathcal{I}_2(a) \right\}.
\end{align}
Next, we set $\alpha = \alphalo + \alphahi$ and construct the interval for $ Y_{n+1} $ given $X_{n+1}=x$ and $A_{n+1}=a$:
\begin{equation} \label{eq:two_tails_interal}
\hat C(x,a) := [\hat{q}_{\alphalo}(x) - Q_{1-\alphalo}(E_{\mathrm{lo}}, \mathcal{I}_2(a)),\; \hat{q}_{\alphahi}(x) + Q_{1-\alphahi}(E_{\mathrm{hi}}, \mathcal{I}_2(a))].
\end{equation}
The validity of this procedure is stated below.

\begin{theorem} \label{thm:validity_qreg_asymmetric}
	Suppose the samples $ (X_i, A_i, Y_i) $, $ i=1,\dots, n+1 $ are exchangeable. With the notation above, put $\operatorname{lower}(X_{n+1}) = \hat{q}_{\alphalo}(X_{n+1}) - Q_{1-\alphalo}(E_{\mathrm{lo}},\mathcal{I}_2(A_{n+1}))$ and $\operatorname{upper}(X_{n+1}) = 
	\hat{q}_{\alphahi}(X_{n+1}) + Q_{1-\alphahi}(E_{\mathrm{hi}}, \mathcal{I}_2(A_{n+1}))$ for short. Then 
	\[
	%\label{eq:asymmetric-coverage-lower}
	1-\alphalo \leq \PP\{Y_{n+1} \geq \operatorname{lower}(X_{n+1})  \mid A_{n+1}=a\} 
	\leq{1-\alphalo+\frac{1}{|\mathcal{I}_2(a)|+1}}
	\]
	and
	\[
	% \label{eq:asymmetric-coverage-upper}
	1-\alphahi \leq \PP\{Y_{n+1} \leq \operatorname{upper}(X_{n+1})  \mid A_{n+1}=a\}\\ \leq{1-\alphahi+\frac{1}{|\mathcal{I}_2(a)|+1}},
	\]
	where the lower bounds above always hold while the upper bounds hold under the additional assumption that the residuals are almost surely distinct. 
	%The lower bounds above always hold whereas the upper bounds hold under the additional assumption that the residuals are almost surely distinct.
	Under these circumstances, the interval  \eqref{eq:two_tails_interal} obeys
	\begin{equation} \label{eq:asymmetric-interval}
	1 - (\alphalo + \alphahi) \leq \PP\{Y_{n+1} \in \hat C(X_{n+1}, A_{n+1}) \mid A_{n+1}=a \} \leq{1-(\alphalo + \alphahi)+\frac{2}{|\mathcal{I}_2(a)|+1}}. 
	\end{equation}
\end{theorem}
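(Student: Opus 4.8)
The plan is to run the argument behind \thmref{validity_qreg} once for each tail and then stitch the two one-sided guarantees together by inclusion--exclusion to obtain \eqref{eq:asymmetric-interval}.

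First I would fix a group $a$ and, as in the proof of \thmref{validity_qreg}, condition on the random split so that the fitted quantile functions $\hat q_{\alphalo}, \hat q_{\alphahi}$ are deterministic. Introduce the one-sided scores $E_i^{\mathrm{lo}} := \hat q_{\alphalo}(X_i) - Y_i$ and $E_i^{\mathrm{hi}} := Y_i - \hat q_{\alphahi}(X_i)$ for $i \in \mathcal{I}_2 \cup \{n+1\}$, so that the quantiles appearing in \eqref{eq:qreg_quantilde_split_low_tail} and \eqref{eq:qreg_quantilde_split_high_tail} are $Q_{1-\alphalo}(E_{\mathrm{lo}}, \mathcal{I}_2(a))$ and $Q_{1-\alphahi}(E_{\mathrm{hi}}, \mathcal{I}_2(a))$. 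Exactly the exchangeability argument used in the proof of \thmref{validity_qreg} shows that, conditional on $A_{n+1}=a$, the family $\{E_i^{\mathrm{lo}} : i \in \mathcal{I}_2(a) \cup \{n+1\}\}$ is exchangeable, and similarly for $\{E_i^{\mathrm{hi}} : i \in \mathcal{I}_2(a) \cup \{n+1\}\}$. Applying Lemma~\ref{lemma:quantile-inflation} with $m = |\mathcal{I}_2(a)|$, with the $E^{\mathrm{lo}}$ family, and with its parameter set to $\alphalo$ yields
\[
1-\alphalo \;\le\; \PP\{E_{n+1}^{\mathrm{lo}} \le Q_{1-\alphalo}(E_{\mathrm{lo}}, \mathcal{I}_2(a)) \mid A_{n+1}=a\} \;\le\; 1-\alphalo + \tfrac{1}{|\mathcal{I}_2(a)|+1},
\]
with the upper inequality requiring the $E^{\mathrm{lo}}$ scores to be almost surely distinct. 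By the definitions of $\operatorname{lower}(\cdot)$ and $E_i^{\mathrm{lo}}$, the event $\{E_{n+1}^{\mathrm{lo}} \le Q_{1-\alphalo}(E_{\mathrm{lo}}, \mathcal{I}_2(a))\}$ is identical to $\{Y_{n+1} \ge \operatorname{lower}(X_{n+1})\}$, so this is exactly the first display of the theorem; the second display follows symmetrically, applying Lemma~\ref{lemma:quantile-inflation} to $\{E_i^{\mathrm{hi}}\}$ with parameter $\alphahi$ and using $\{E_{n+1}^{\mathrm{hi}} \le Q_{1-\alphahi}(E_{\mathrm{hi}}, \mathcal{I}_2(a))\} = \{Y_{n+1} \le \operatorname{upper}(X_{n+1})\}$.

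For \eqref{eq:asymmetric-interval} I would write $\{Y_{n+1} \notin \hat C(X_{n+1}, A_{n+1})\} = \{Y_{n+1} < \operatorname{lower}(X_{n+1})\} \cup \{Y_{n+1} > \operatorname{upper}(X_{n+1})\}$. The lower bound on coverage is then immediate from a union bound: $\PP\{Y_{n+1} \notin \hat C(X_{n+1}, A_{n+1}) \mid A_{n+1}=a\} \le \alphalo + \alphahi$. For the upper bound, note that the events $\{Y_{n+1} < \operatorname{lower}(X_{n+1})\}$ and $\{Y_{n+1} > \operatorname{upper}(X_{n+1})\}$ are disjoint (the interval is well formed, with left endpoint no larger than the right), so their conditional probabilities add; these are at least $\alphalo - \tfrac{1}{|\mathcal{I}_2(a)|+1}$ and $\alphahi - \tfrac{1}{|\mathcal{I}_2(a)|+1}$ respectively by the upper bounds established above, whence $\PP\{Y_{n+1} \notin \hat C(X_{n+1}, A_{n+1}) \mid A_{n+1}=a\} \ge \alphalo + \alphahi - \tfrac{2}{|\mathcal{I}_2(a)|+1}$, i.e.\ the coverage is at most $1 - (\alphalo + \alphahi) + \tfrac{2}{|\mathcal{I}_2(a)|+1}$.

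I do not anticipate a genuine obstacle: the one substantive ingredient, the within-group conditional exchangeability of the conformity scores, is already handled in the proof of \thmref{validity_qreg} and carries over verbatim to each of the two one-sided score sequences (now with the almost sure distinctness hypothesis imposed on each residual sequence separately). The only step that warrants care is the combination: the slack in the upper bound is $2/(|\mathcal{I}_2(a)|+1)$ rather than $1/(|\mathcal{I}_2(a)|+1)$ precisely because we pay once per tail, and this accounting is valid only because missing below and missing above are mutually exclusive events --- that is, only because $\hat C(X_{n+1}, A_{n+1})$ is a genuine (possibly empty) interval.
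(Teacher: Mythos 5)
Your proposal is correct and follows essentially the same route as the paper, whose entire proof consists of the single remark that the bounds follow by applying Lemma~\ref{lemma:quantile-inflation} once per tail; you merely make explicit the per-tail exchangeability and the inclusion--exclusion step that the paper leaves implicit. One small caveat: the disjointness of the two miss events (needed only for the anti-conservative upper bound in \eqref{eq:asymmetric-interval}, not for the coverage lower bound) requires $\operatorname{lower}(X_{n+1})\le\operatorname{upper}(X_{n+1})$, which fails exactly when the interval is empty --- so your parenthetical ``(possibly empty)'' should read ``non-empty''; the paper glosses over this point as well.
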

\begin{proof}
	As in the proof of Theorem~\ref{thm:validity_qreg}, the validity of the lower and upper bounds is obtained by applying Lemma~\ref{lemma:quantile-inflation} twice.
\end{proof}

\subsection{Group-conditional conformal prediction} \label{sec:group-conformal}

The difference between CQR \cite{romano2019conformalized} and split conformal prediction \cite{papadopoulos2002inductive} is that the former calibrates an estimated quantile regression interval $\hat{C}_{\text{init}}(X)$, while the latter builds a prediction interval around an estimate of the conditional mean $\hat Y = \hat \mu(X)$. For instance, $\hat \mu$ can be formulated as a \emph{classical regression} function estimate, obtained by minimizing the mean-squared-error loss over the proper training examples. To construct predictive intervals for the group $A=a$, then simply replace both $\hat{q}_{\alphalo}$ and $\hat{q}_{\alphahi}$ with $\hat{\mu}$ in \mbox{Algorithm \ref{alg:split_qreg}} (or in its two-tailed variant). The theorems go through, and this procedure gives predictive intervals with exactly the same guarantees as before. As we will see in our empirical results, a benefit of explicitly modeling quantiles is superior statistical efficiency. 

\section{Case study: predicting utilization of medical services} \label{sec:experiments}

The Medical Expenditure Panel Survey (MEPS) 2016 data set \cite{meps_21}, provided by the Agency for Healthcare Research and Quality, contains information on individuals and their utilization of medical services. The features used for modeling include age, marital status, race, poverty status, functional limitations, health status, health insurance type, and more. We split these features into dummy variables to encode each category separately. The goal is to predict the health care system utilization of each individual; a score that reflects the number of visits to a doctor's office, hospital visits, etc. After removing observations with missing entries, there are $n=15656$ observations on $p=139$ features. We set the sensitive attribute $A$ to \textbf{race}, with $A=0$ for non-white and $A=1$ for white individuals, resulting in $n_0=9640$ samples for the first group and $n_1=6016$ for the second. In all experiments we transform the response variable by $Y = \log(1 + (\text{utilization score}))$ as the raw score is highly skewed.

Below, we illustrate that empirical quantiles can be used to detect prediction bias. Next, we show that usual (marginal) conformal methods do not attain equal coverage across the two groups. Finally, we compare the performance of joint vs.~groupwise model fitting and show that, in this example, the former yields shorter predictive intervals. 
% The joint modeling approach follows equation~\eqref{eq:learning} and fits a single predictive model for both groups using the whole proper training set. In contrast, the separate modeling approach constructs prediction intervals for group $A=a$ by (i) fitting a regression model to the proper training examples that belong to this specific group, and (ii) conformalzing the regression estimate to form valid intervals.

% We also compare between joint and separate model fitting and show that the former yields shorter intervals. 

\begin{figure}[t]
	\centering
	\includegraphics[width=0.50\textwidth]{./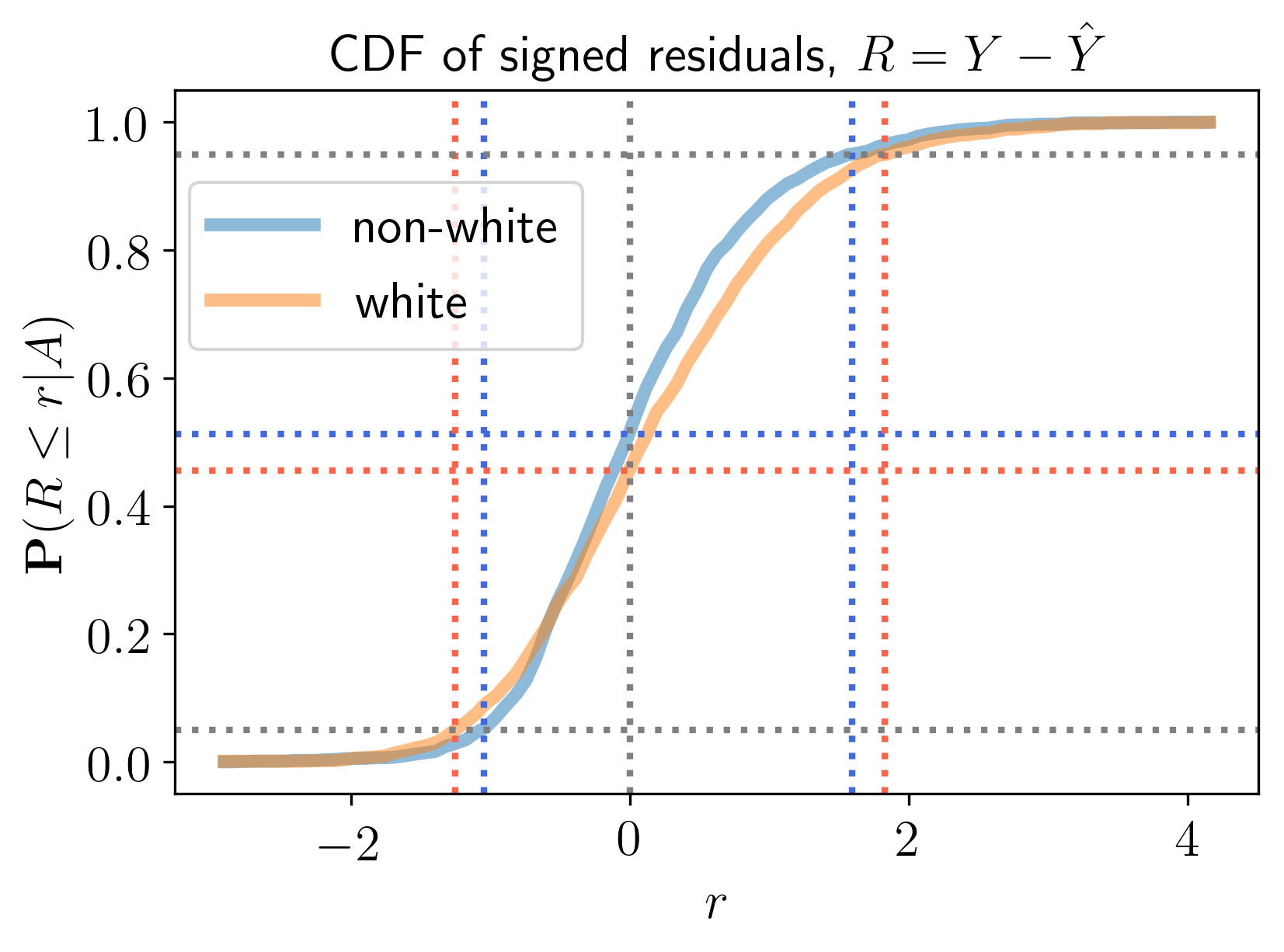}
	\caption{Empirical cumulative distribution function of the signed residuals $R = Y - \hat{Y}$ for each sensitive attribute, 
		computed on test samples. 
		%  	light blue -- non-white individuals, orange -- white individuals. 
		The blue dashed horizontal line is the value  of $\PP\{Y \leq \hat{Y} | A=0\}$ equal to $0.51$. Similarly, the red dashed horizontal line is $\PP\{Y \leq \hat{Y} | A=1\} = 0.45$.  The dashed vertical colored lines present the 0.05th and 0.95th quantiles of each group, defined in \eqref{eq:emp_low_q} and \eqref{eq:emp_high_q}, respectively. Left side: $r^{\text{lo}}_0=-1.04$ (blue), $r^{\text{lo}}_1=-1.25$ (red). Right side: $r^{\text{hi}}_0=1.59$ (blue) and $r^{\text{hi}}_1=1.83$ (red).} 
	\label{fig:meps_median}
\end{figure}

% \begin{figure}
% 	\centering
% 	\includegraphics[width=0.60\textwidth]{./figures/CDF_lo_hi_q.png}
% 	\caption{Empirical low and high quantiles of test residuals (see caption of Figure \ref{fig:meps_median} for more details). The left blue dashed vertical line corresponds to $r^{\text{lo}}_0=-1.06$ for which $\PP\{R \leq r^{\text{lo}}_0 | A=0\} \geq 0.05$. Similarly, the left red dashed line presents the probability $\PP\{R \leq r^{\text{lo}}_1 | A=1\} \geq 0.05$, with $r^{\text{lo}}_1=-1.29$. Analogously, the colored vertical dashed lines on the right reflect the values of $r$ that correspond to the high 0.95th quantiles; blue: $r^{\text{hi}}_0=1.61$; red: $r^{\text{hi}}_1=1.83$.
% 	} 
% 	\label{fig:meps_low_high_q}
% \end{figure}

\subsection{Bias detection} \label{sec:bias_detection}

We randomly split the data into train (80\%) and test (20\%) sets and standardized the features to have zero mean and unit variance; the means and variances are computed using the training examples. Then we fit a neural network regression function $\hat{\mu}$ on the training set, where the network architecture, optimization, and hyper-parameters are similar to those described and implemented in \cite{romano2019conformalized}. The code for reproducing all the experiments is available online at \url{https://github.com/yromano/cqr}. Next, we compute the signed residuals of the test samples,
\begin{align}
R_i = Y_i - \hat{Y}_i,
\end{align}
where $\hat{Y}_i = \hat{\mu}(X_i)$, and plot the resulting empirical cumulative distribution functions $\PP\{R \leq r | A=0\}$ and $\PP\{R \leq r | A=1\}$ in Figure \ref{fig:meps_median}. Observe that $\PP\{R \leq r | A=0\} \neq \PP\{R \leq r | A=1\}$. In particular, when comparing the two functions at $r=0$, we see that $\hat{\mu}$ overestimates the response of the non-white group and underestimates the response of the white group, as
\begin{align}
\PP\{Y \leq \hat{Y} | A=0\} = 0.51 > 0.45 = \PP\{Y \leq \hat{Y} | A=1\}.
\end{align}

Recall that the lower and upper quantiles of the signed residuals are used to construct valid group-conditional prediction intervals. While these must be evaluated on calibration examples (see next section), for illustrative purposes we present below the 0.05th and 0.95th quantiles of each group using the two cumulative distribution functions of test residuals. To this end, we denote by $ r^{\text{lo}}_0 $ and $ r^{\text{lo}}_1 $ the lower empirical quantiles of the non-white and white groups, defined to be the smallest numbers obeying the relationship
%We also compute the low empirical quantiles of each group, denoted by $ r^{\text{lo}}_0 $ and $ r^{\text{lo}}_1 $, and defined to be the smallest numbers obeying the relationship 
\begin{equation} \label{eq:emp_low_q}
\PP\{R \leq r^{\text{lo}}_0 | A = 0\} \geq 0.05 \quad \text{and} \quad \PP\{R \leq r^{\text{lo}}_1 | A = 1\} \geq 0.05.
\end{equation}
Following Figure \ref{fig:meps_median}, this pair is equal to
\begin{equation}
r^{\text{lo}}_0 = -1.04 > -1.25 = r^{\text{lo}}_1,
\end{equation}
implying that for at least 5\% of the test samples of each group, the fitted regression function $\hat{\mu}$ overestimates the utilization of medical services with larger errors for white individuals than for non-white individuals. %subjects.

As for the upper empirical quantiles,
%\rfb{I prefer the terms ``lower'' and ``upper'' rather than ``low'' and ``high'', throughout (I think the notation $\alpha_{lo}$ and $\alpha_{hi}$ is fine in either case though)}\yr{fixed (here and everywhere)}
we compute the smallest $ r^{\text{hi}}_0 $ and $ r^{\text{hi}}_1 $ obeying
\begin{equation} \label{eq:emp_high_q}
\PP\{R \leq r^{\text{hi}}_0 | A = 0\} \geq 0.95 \quad \text{and} \quad \PP\{R \leq r^{\text{hi}}_1 | A = 1\} \geq 0.95,
\end{equation}
and obtain 
\begin{equation}
r^{\text{hi}}_0 = 1.59 < 1.83 = r^{\text{hi}}_1.
\end{equation}
Here, in order to cover the target variable for white individuals at least 95\% of the time we should inflate the regression estimate by an additive factor equal to $1.83$. For non-white individuals, the additive factor is \emph{smaller} and equal to $1.59$. This shows that $\hat{\mu}$ systematically predicts higher utilization of non-white individuals relative to white individuals.

%This shows, once again, that $\hat{\mu}$ is biased, predicting higher utilization of non-white individuals relative to white subjects.

\subsection{Achieving equalized coverage} \label{sec:ec_experiments}

We now verify that our proposal constructs intervals with equal coverage across groups. Below, we set $\alpha = 0.1$. To avoid the coverage errors to be spread arbitrarily over the left and right tails, we choose to control the two tails independently by setting $\alphalo = \alphahi = \alpha/2 = 0.05$ in \eqref{eq:two_tails_interal}.
%, where the target miscoverage rate is fixed in all experiments and equal to $\alpha=0.1$.
We arbitrarily set the size of the proper training and calibration sets to be identical. (The features are standardized as discussed earlier.) 

%\rfb{This paragraph is new, replacing a shorter version---I felt we needed a bit more detail}
For our experiments, we test six different methods for producing conformal predictive intervals. 
We compare two types of constructions for the predictive interval:
\begin{itemize}
    \item Conformal prediction (CP),
    %\rfb{For consistency maybe the table and text should write CP vs CQR, not ``conformal'' or ``conformal prediction'' vs CQR}\yr{now we use CP and CQR in the table}
    where the predictive interval is built around an estimated mean $\hat{\mu}$ (as described in Section~\ref{sec:group-conformal});
    \item Conformal quantile regression (CQR), where the predictive interval is constructed around initial estimates $\hat{q}_{\alpha_{\text{lo}}}$ and $\hat{q}_{\alpha_{\text{hi}}}$ of the lower and upper quantiles.
\end{itemize}
In both cases, we use a neural network to construct the models; we train the models using the software provided by \cite{romano2019conformalized}, using the same neural network design and learning strategy. For both the CP and CQR constructions, we then implement three versions:
\begin{itemize}
    \item Marginal coverage, where the intervals $\hat{C}(X)$ are constructed by pooling all the data together rather than splitting into subgroups according to the value of $A$;
    \item Conditional coverage with groupwise models, where the initial model for the mean $\hat{\mu}$ or for the quantiles $\hat{q}_{\alpha_{\text{lo}}},\hat{q}_{\alpha_{\text{hi}}}$ is constructed separately for each group $A=0$ and $A=1$;
    \item Conditional coverage with a joint model, where the initial model for the mean $\hat{\mu}$ or for the quantiles $\hat{q}_{\alpha_{\text{lo}}},\hat{q}_{\alpha_{\text{hi}}}$ is constructed pooling data across both groups $A=0$ and $A=1$.
\end{itemize}

The results are summarized in Table \ref{tab:results}, displaying the average length and coverage of the marginal and group-conditional conformal methods. These are evaluated on \emph{unseen test data} and averaged over 40 train-test splits, where 80\% of the samples are used for training (the calibration examples are a subset of the training data) and 20\% for testing. 
All the conditional methods perfectly achieve 90\% coverage per group (this is a theorem after all). On the other hand, the marginal CP method under-covers in the white group and over-covers in the non-white group (interestingly, though, the marginal CQR method attains equalized coverage even though it is not designed to give such a guarantee).
%\rfb{I think we should remove this sentence: ``This highlights the advantage of quantile regression, which can potentially construct an approximately valid prediction interval for a given value of the covariates $X=x$.''}
% Notice how both \textbf{Marginal Neural Network} and \textbf{Marginal CQR Neural Network} undercover the response variables of the white group. These two marginal methods also overcover the target variables of the non-white group. 

Turning to the statistical efficiency of the conditional conformal methods, we see that conditional CQR outperforms conditional CP in that it constructs shorter and, hence, more informative intervals, especially for the non-white group. The table also shows that the intervals for the white group are wider than those for the non-white group across all four conditional methods, and that joint model fitting is here more effective than groupwise model fitting as the former achieves shorter prediction intervals.
% This conclusion aligns with the low and high empirical quantiles displayed in Figure \ref{fig:meps_low_high_q} (although measured on test data). 

\begin{table}[t]
	\centering
	\begin{tabular}{@{}llcc@{}}
		\toprule
		\multicolumn{1}{c}{Method}                             & \multicolumn{1}{c}{Group} & \multicolumn{1}{c}{Avg. Coverage} & Avg. Length \\ \midrule
		\multirow{2}{*}{*Marginal CP}                  & Non-white                 & 0.920                             & 2.907       \\
		& White                     & 0.871                             & 2.907       \\
		\cmidrule(l){1-4}
		\multirow{2}{*}{Conditional CP (groupwise)}     & Non-white                 & 0.903                             & 2.764       \\
		& White                     & 0.901                             & 3.182       \\
		\cmidrule(l){1-4}
		\multirow{2}{*}{Conditional CP (joint)}        & Non-white                 & 0.904                             & 2.738       \\
		& White                     & 0.902                             & 3.150       \\
		\cmidrule(l){1-4}
		\multirow{2}{*}{*Marginal CQR}              & Non-white                 & 0.905                             & 2.530       \\
		& White                     & 0.894                             & 3.081       \\
		\cmidrule(l){1-4}
		\multirow{2}{*}{Conditional CQR (groupwise)} & Non-white                 & 0.904                             & 2.567       \\
		& White                     & 0.900                             & 3.203       \\
		\cmidrule(l){1-4}
		\multirow{2}{*}{Conditional CQR (joint)}    & Non-white                 & 0.902                             & {2.527}       \\
		& White                     & 0.901                             & {3.102}       \\ 
		\bottomrule
	\end{tabular}
	
	\vspace{10pt}
	\caption{Length and coverage of both marginal and group-conditional prediction intervals ($ \alpha =0.1$) constructed by conformal prediction (CP) and CQR for MEPS dataset \cite{meps_21}. The results are averaged across 40 random train-test (80\%/20\%) splits. \emph{Groupwise} -- two independent predictive models are used, one for non-white and another for white individuals; \emph{joint} -- the same predictive model is used for all individuals. In all cases, the model is formulated as a neural network. The methods marked by an asterisk are not supported by a group-conditional coverage guarantee. %\rfb{The method names in this caption and in the table are no longer in agreement with the names in the text}\yr{fixed}
	}
	\label{tab:results}
\end{table}

\section{Discussion} \label{sec:ec_limitations}

% {\color{blue}
% EC does not supplant equalized odds; here we would highlight a second illustrative example where we would still want equalized odds because equalized coverage would not cut it.

% Options:
% \begin{itemize}
%     \item Different sample size per group may result in different interval width, reflecting the limitation of the predictive model. Rina's second example could fit nicely here.
%     \item Discussion on Rina's first example. Equalized odds will put utility on the table. We are not decision makers.
% \end{itemize}
% }

\subsection{Larger intervals for a subpopulation}

It is possible that the intervals constructed with our procedure have different lengths across groups.
%by biased predictive models will tend to have different average length across groups. 
For example, our experiments show that, on the average, the white group has wider intervals than the non-white group.
One might argue that the different length distribution is in itself a type of unfairness.
%One might argue that the different length distribution is in itself a cause of unfairness and that a fair algorithm should avoid it
%\rfb{I think this sentence might be confusing ... aren't we trying to say something more along the lines of, in a perfect world we might be able to provide same length intervals for everyone, but given the available data it might be unavoidable? I would shorten the sentence to just say ``One might argue that the different length distribution is in itself a cause of unfairness.'' (or maybe a ``type'' of unfairness.} \yr{fixed}
We believe that when there is a difference in performance across the protected groups, one needs to understand why this is so. In some cases this difference might be reduced by improving the predictive algorithm, collecting more data for the minority population, introducing new features with higher predictive power, and so on. It may also be the case that higher predictive precision in one group versus another may arise from bias, whether intentional or not, in the type of model we use, the choice of features we measure, or other aspects of our regression process---e.g., if historically more emphasis was placed on finding accurate models for a particular group $a$, then we may be measuring features that are highly informative for prediction within group $a$ while another group $a'$ would be better served by measuring a different set of variables. Crucially, %we are not proposing an intervention or a new policy to alleviate this issue, as our focus is in statistical inference and what we need to guarantee there and not in automatic decision making. If the decision maker is disappointed to see an interval that is long, this means that the predictive model cannot say very much.
we do not want to mask this differential in information, but rather make it explicit---thereby possibly motivating the decision makers to take action.

% Pressing this point further, let's consider the following decision, based on perfect knowledge (no need of statistical inference). We might want to admit to college a student from a minority group with even if we know his GPA will be among the lowest, and he will need extra resources, simply because his disadvantageous background and we believe that by giving him (and others like him) the opportunity to go to college, we will eventually reduce the racial disparity and inequality that plague our society. To automate decisions of this type, we would need to specify a set of preferences, multi-generational outcomes, and more, that go well beyond predicting the GPA of a given student. Note that people, depending on their preferences and their goals, would disagree on what is the appropriate decision here, and that is why it is presumptuous to automate decisions for others.

% \yradd{
% \subsection{Same predictive model for all individuals} \label{sec:joint_vs_groupwise}

% Another dimension in which equalized coverage supports equitable treatment is that we have an opportunity to use the same learning algorithm for all individuals, borrowing strength from the entire population, while adjusting ``global'' predictions to make ``local'' predictions valid for each group. As illustrated in our experiments, such a training strategy may even improve the statistical efficiency of the predictive model.
% }

\subsection{The use of protected attribute}

% \rfb{I am leaning towards asking whether we are certain we want to include this section. This is obviously an important issue but it's quite far outside the scope of our study and I'm not sure we can do it justice here.} \ejc{I believe this section is *extremely* important.}

The debate around fairness in general, and our proposal in particular, requires the definition of ``classes'' of individuals across which we would like an unbiased treatment. In some cases these coincide with ``protected attributes'' where discrimination on their basis is prohibited by the law. The  legislation sometimes does not allow the decision maker to know/use the protected attribute in reaching a conclusion, as a measure to caution against discrimination. While ``no discrimination'' is a goal everyone should embrace regardless whether the law mandates it or not, the opportunity of using ``protected attributes'' in data-driven recommendation systems is another matter. On the one hand, not relying on protected attributes is certainly not sufficient to guarantee absence of discrimination (see, e.g., \cite{dwork2012fairness,hardt2016equality,dieterich2016compas,corbett2018measure,buolamwini2018gender,gardner2019evaluating,zafar2017fairness,chouldechova2017fair}).
On the other hand, information on protected attributes might be necessary to guarantee equitable treatment.
Our procedure relies on the knowledge of protected attributes, so we want to expand on this last point a little.
In absence of knowledge of what are the causal determinants of an outcome, ``protected attributes'' can be an important component of a predictor. To quote from  \cite{corbett2018measure}: ``in the criminal justice system, for example, women are typically less likely to commit a future violent crime than men with similar criminal histories. As a result, gender-neutral risk scores can systematically overestimate a woman's recidivism risk, and can in turn encourage unnecessarily harsh judicial decisions. Recognizing this problem, some jurisdictions, like Wisconsin, have turned to gender-specific risk assessment tools to ensure that estimates are not biased against women.''
For disease risk assessment (Example~\ref{item:disease_risk} earlier) or 
related tasks such as diagnosis and drug prescription, race often provides relevant information and is routinely used. Presumably, once we understand the molecular basis of diseases and drug responses, and once sufficiently accurate measurements on patients are available, race may cease to be useful. Given present circumstances, however, Risch et al. \cite{RetT02} argue that ``identical treatment is not equal treatment'' and that ``a race-neutral or color-blind approach to biomedical research is neither equitable nor advantageous, and would not lead to a reduction of disparities in disease risk or treatment efficacies between groups.'' In our context, the use of protected attributes allows a rigorous evaluation of the potentially biased performance for different groups. Clearly, our current proposal can be adopted only when data on protected attributes has been collected and generalizations are topic for further research.

\subsection{Conclusion and future work} \label{sec:conclusion}

% {\color{blue}
% \begin{itemize}
%     \item The need for uncertainty estimation. The need for equalized coverage.
%     \item Combining equalized odds (or any other fairness measure) with equalized coverage. Fit your black box with fairness constraints.
%     \item What if we cannot use the sensitive attribute?
%     \item What if the sensitive attribute is not discrete?
% \end{itemize}
% }

%We propose moving away from traditional fairness criteria by 
We add to the tools that support fairness in data-driven recommendation systems by 
developing a highly operational
%meta-algorithm
method
that can augment any prediction rule
with the best available \emph{unbiased} uncertainty estimates across groups. This is achieved by constructing prediction intervals that attain valid coverage regardless of the value of the sensitive attribute. The method is supported by rigorous coverage guarantees, as demonstrated on real data examples. Although the focus of this paper is on continuous response variables, one can adapt tools from conformal inference \cite{vovk2005algorithmic} to construct prediction sets with equalized coverage for categorical target variables as well.

In this paper, we have not discussed other measures of fairness. In a future publication, we intend to explore the relationship between equalized coverage
%\rfb{Should we be consistent and always use ``equalized coverage''?}\yr{fixed here and in other places that the use of ``equalized'' instead of ``equal'' makes sense for me}
and existing fairness criteria.

% \yr{I'm not sure if the following should be mentioned} Beyond fairness, our operationalization of equal treatment across groups has applications to many of our data domains. Indeed, each domain draws data from multiple heterogeneous sources---satellite and other imagery, panel surveys and web scraping, social and traditional media---so that building prediction intervals that are accurate \emph{across} modalities will be both important and presents an attractive long-term goal with applications far beyond fairness.

\subsection*{Acknowledgements}
E.~C. was partially supported by the Office of Naval Research (ONR)
under grant N00014-16- 1-2712, by the Army Research Office (ARO) under
grant W911NF-17-1-0304, by the Math + X award from the Simons
Foundation and by a generous gift from TwoSigma. Y.~R.~was partially supported by the
ARO grant and by the same Math + X award.  Y.~R.~thanks the Zuckerman
Institute, ISEF Foundation and the Viterbi Fellowship, Technion, for providing additional research support. R.~F.~B.~was partially supported National Science Foundation via grant DMS-1654076 and by an Alfred P. Sloan fellowship. C.~S.~was partially supported National Science Foundation via grant DMS-1712800.

\bibliographystyle{ieeetr}
\bibliography{MyBib}

%\bibliographystyle{unsrt}
%\bibliography{MyBib}

\appendix

\section{Quantile Neural Networks} \label{app:qnet}

% We follow \cite{koenker1978regression} and formulate this estimation problem as a learning task:
% \begin{align} \label{eq:q_optimization}
% \hat{q}_\alpha(x) = f(x; \hat\theta), \qquad \hat\theta = \underset{\theta}{\mathrm{argmin}} \ \frac{1}{|\mathcal{I}_1|} \sum_{i \in \mathcal{I}_1} \rho_\alpha (Y_i - f(X_i ; \theta))  + \mathcal{R}(\theta),
% \end{align}
% where $ \rho_\alpha $ is the pinball loss \cite{koenker1978regression,steinwart2011estimating}, defined by
% \begin{align} \label{eq:pinball}
% \rho_\alpha(z) := \begin{cases}
% \alpha z & \text{if } z> 0, \\
% (\alpha-1)z & \text{otherwise};
% \end{cases}
% \end{align} 
% $f(x;\theta)$ is a regression function that, in our experiments, is formulated as a deep neural network with parameters $\theta$, and $\mathcal{R}(\theta)$ is a possible regularizer. Suppose that the sensitive attribute $A$ is contained in $X$. The functions $\hat{q}_{\alphalo}$ and $ \hat{q}_{\alphahi}$ can potentially form to a prediction interval with conditional $1-\alpha$ coverage rate, however, in practice $ \hat{C}(x) $ was shown to largely under- or over-cover the test target variables \cite{romano2019conformalized}. Furthermore, in general, this approach is not supported by finite sample coverage guarantee \cite{steinwart2011estimating,takeuchi2006nonparametric,meinshausen2006quantile}.

We follow \cite{koenker1978regression} and cast the estimation problem of the conditional quantiles of $ Y | X {=} x $ as an optimization problem. Given training examples $\{(X_i,Y_i) : i \in \mathcal{I}_1\}$, we fit a parametric model using the pinball loss \cite{koenker1978regression,steinwart2011estimating}, defined by
\begin{align}
\rho_\alpha(y - \hat y) := \begin{cases}
\alpha (y - \hat y), & \text{if } y - \hat y > 0, \\
(\alpha-1) (y - \hat y) ,& \text{otherwise},
\end{cases}
\end{align} 
where $\hat y$ is the output of a regression function $\hat q_\alpha(x)$ formulated as a deep neural network. The network design and training algorithm are
% We follow \cite{koenker1978regression} and cast the estimation problem of the conditional quantiles of $ Y | X {=} x $ as an optimization problem. Given training examples $\{(X_i,Y_i) : i \in \mathcal{I}_1\}$, we fit a parametric model by
% \begin{align} \label{eq:q_optimization}
% \hat{q}_\alpha(x) = f(x; \hat\theta), \qquad \hat\theta = \underset{\theta}{\mathrm{argmin}} \ \frac{1}{|\mathcal{I}_1|} \sum_{i \in \mathcal{I}_1} \rho_\alpha (Y_i - f(X_i ; \theta))  + \mathcal{R}(\theta),
% \end{align}
% where $ \rho_\alpha $ is the pinball loss \cite{koenker1978regression,steinwart2011estimating}, defined by
% \begin{align} \label{eq:pinball}
% \rho_\alpha(z) := \begin{cases}
% \alpha z, & \text{if } z> 0, \\
% (\alpha-1)z ,& \text{otherwise};
% \end{cases}
% \end{align} 
% here, $\mathcal{R}(\cdot)$ is a possible regularizer (see description below). 
% In our experiments, $f(x;\theta)$ is a regression function formulated as a deep neural network with parameters $\theta$. The network design and training algorithm are 
identical to those described in \cite{romano2019conformalized} (once again, the source code is available online at {\url{https://github.com/yromano/cqr}}). Specifically, we use a two-hidden-layer neural network, with ReLU nonlinearities. The hidden dimension of both layers is set to 64. We use Adam optimizer \cite{kingma2014adam}, with minibatches of size $64$ and a fixed learning rate of $ 5 \times 10^{-4} $. We employ weight decay regularization with parameter equal to $ 10^{-6} $ and also use dropout \cite{srivastava2014dropout} with a dropping rate of $0.1$. We tune the number of epochs using cross validation (early stopping), with an upper limit of $1000$ epochs.

\color{teal}

\end{document}